\documentclass[aps,pra,twocolumn,superscriptaddress,nofootinbib]{revtex4-2}

\usepackage{amsmath,amssymb,mathtools,bm,amsthm}
\usepackage{microtype}
\usepackage{hyperref}
\hypersetup{hidelinks}

\newtheorem{theorem}{Theorem}
\newtheorem{lemma}{Lemma}
\newcommand{\HA}{\mathcal H_A}
\newcommand{\HB}{\mathcal H_B}
\newcommand{\SA}{\mathcal S_A}
\newcommand{\SB}{\mathcal S_B}
\newcommand{\E}{\mathcal E}
\newcommand{\B}{\mathcal B}
\newcommand{\id}{\mathbb I}
\newcommand{\ketbra}[1]{|#1\rangle\!\langle #1|}

\begin{document}

\title{Localizable Bipartite Rank-One Ideal Measurements Have a Block-Replicated Nice-Bell Structure}

\author{Ahmed Younis}
\affiliation{Independent Researcher}

\date{August 27, 2026}

\begin{abstract}
We prove a complete structural characterization of finite-dimensional bipartite rank-one ideal projective measurements that are localizable without communication in the sense of Akibue and Miyazaki. Up to local-unitary equivalence, every such measurement basis is obtained by replicating a single nice Bell basis across equal-dimensional local subspace blocks. The necessity proof combines the causal block structure of complete measurements established by Beckman, Gottesman, Nielsen, and Preskill with their eigenstate-composition theorem for localizable superoperators. A reference block is first forced to be a nice Bell basis; the same theorem then propagates that basis consistently along every row, column, and interior block. The converse follows from the explicit localization protocol of Akibue and Miyazaki. This establishes their Conjecture 1 and gives a protocol-independent classification of bipartite rank-one ideal measurements in this setting.
\end{abstract}

\maketitle

\section{Introduction}

A joint quantum measurement may respect relativistic no-signaling while nevertheless fail to admit an implementation by spatially separated parties who share entanglement but do not communicate. Beckman, Gottesman, Nielsen, and Preskill (BGNP) formalized this distinction in terms of \emph{causal} and \emph{localizable} quantum operations and derived strong structural restrictions on complete projective measurements \cite{Beckman2001}. More recently, the localization problem has been revisited from the perspective of entanglement cost and protocol-independent algebraic constraints \cite{Pauwels2025,Akibue2026}.

Akibue and Miyazaki introduced a localization criterion for instruments and studied ideal projective measurements, where the projected post-measurement state must be reproduced in addition to the classical outcome \cite{Akibue2026}. Their Definition 4 is stronger than the BGNP notion for the associated decoherence map: localizability of an ideal instrument implies BGNP-localizability of the sum of its instrument elements. They also gave an explicit localization protocol for bases of the block-replicated form
\begin{equation}
\left\{W_p^A\otimes W_q^B |M_i\rangle\!\rangle\right\}_{i,p,q},
\label{eq:blockform}
\end{equation}
where the local Hilbert spaces decompose into equal-dimensional subspaces, $\{|M_i\rangle\!\rangle\}_{i=1}^{d^2}$ is a nice Bell basis on one reference block, and $W_p^A,W_q^B$ are unitary isomorphisms between the reference subspaces and the other blocks. They conjectured that, up to local-unitary (LU) equivalence, Eq.~\eqref{eq:blockform} is not only sufficient but necessary for every bipartite rank-one ideal measurement to be localizable \cite{Akibue2026}.

Here we prove that conjecture. The argument is short because the BGNP eigenstate-composition constraint, when combined with their causal block decomposition, removes the freedom to twist different blocks independently.

\section{Setup and two BGNP ingredients}

Let $\{P_x\}_x$ be a rank-one projective measurement on $\HA\otimes\HB$, with $P_x=\ketbra{\phi_x}$, and let
\begin{equation}
\E(\rho)=\sum_x P_x\rho P_x
\label{eq:dephase}
\end{equation}
be its decoherence superoperator. Suppose the associated ideal instrument is localizable in the sense of Definition 4 of Ref.~\cite{Akibue2026}. Then $\E$ is localizable in the BGNP sense and hence causal \cite{Akibue2026,Beckman2001}.

For a causal complete rank-one measurement, BGNP show that there is a common integer $d$ and decompositions
\begin{equation}
\HA=\bigoplus_{p=1}^{r_A}\SA^p,\qquad
\HB=\bigoplus_{q=1}^{r_B}\SB^q,
\label{eq:blocks}
\end{equation}
with $\dim\SA^p=\dim\SB^q=d$, such that the measurement basis splits into blocks
\begin{equation}
\B_{pq}\subset \SA^p\otimes\SB^q,
\qquad |\B_{pq}|=d^2,
\label{eq:blockbasis}
\end{equation}
and every vector in every block is maximally entangled across the corresponding $d\times d$ subspace pair \cite{Beckman2001}.

We also use BGNP Theorem 5: if a localizable superoperator has eigenstates
\begin{equation}
|\psi\rangle,\quad (A\otimes\id)|\psi\rangle,\quad
(\id\otimes B)|\psi\rangle,
\label{eq:t5hyp}
\end{equation}
with $A$ and $B$ invertible, then $(A\otimes B)|\psi\rangle$ is also an eigenstate \cite{Beckman2001}.

For completeness, the following elementary observation identifies the pure eigenstates relevant here.

\begin{lemma}
For the rank-one decoherence map \eqref{eq:dephase}, a pure state $|\psi\rangle$ is an eigenstate in the BGNP sense, $\E(\ketbra{\psi})=\ketbra{\psi}$, if and only if $|\psi\rangle$ is proportional to one of the measurement-basis vectors $|\phi_x\rangle$.
\end{lemma}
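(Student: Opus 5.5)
The plan is to prove the two directions separately. For a rank-one complete measurement $\{P_x\}$ the vectors $|\phi_x\rangle$ form an orthonormal basis of $\HA\otimes\HB$. I will use only this basis property and the explicit form of $\E$, writing $\psi$ for a normalized vector throughout.

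\emph{Sufficiency.} Suppose $|\psi\rangle = e^{i\theta}|\phi_y\rangle$. Then $P_x\ketbra{\psi}P_x=\delta_{xy}\ketbra{\phi_y}$ by orthonormality. Summing over $x$ gives $\E(\ketbra{\psi})=\ketbra{\phi_y}=\ketbra{\psi}$, because the global phase cancels in the projector.

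\emph{Necessity.} Expand $|\psi\rangle=\sum_x c_x|\phi_x\rangle$ with $\sum_x|c_x|^2=1$. A direct computation gives
\begin{equation*}
\E(\ketbra{\psi})=\sum_x |c_x|^2\ketbra{\phi_x}.
\end{equation*}
This operator is diagonal in the basis $\{|\phi_x\rangle\}$, while $\ketbra{\psi}$ has matrix elements $c_x\bar c_{x'}$ in the same basis. Equating the off-diagonal entries gives $c_x\bar c_{x'}=0$ for all $x\neq x'$, so at most one coefficient $c_y$ is nonzero. Normalization then forces $|c_y|=1$, hence $|\psi\rangle=c_y|\phi_y\rangle$.

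An alternative route is purity. The state $\E(\ketbra{\psi})$ has eigenvalues $|c_x|^2$, so it is rank one only if a single $c_x$ survives.

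There is no real obstacle here. The one point needing care is the convention that BGNP eigenstates are normalized pure states with eigenvalue $1$. If a general eigenvalue $\lambda$ were allowed, taking the trace of $\E(\ketbra{\psi})=\lambda\ketbra{\psi}$ and using that $\E$ is trace preserving gives $\lambda=1$. In the unnormalized vectors appearing in Eq.~\eqref{eq:t5hyp}, such as $(A\otimes\id)|\psi\rangle$, "eigenstate" should be read up to normalization. "Proportional to" in the statement absorbs both the normalization and the phase.
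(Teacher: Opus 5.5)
Your proof is correct and follows the paper's argument. You expand $|\psi\rangle$ in the orthonormal basis $\{|\phi_x\rangle\}$, note that $\E$ removes the off-diagonal terms $c_x\bar c_{x'}$ so that only one coefficient can survive, and check the converse directly; your added points on normalization, and on trace preservation forcing eigenvalue $1$, just make explicit what the paper leaves implicit.
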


\begin{proof}
Write $|\psi\rangle=\sum_x c_x|\phi_x\rangle$. Equation~\eqref{eq:dephase} removes every off-diagonal term $c_xc_y^*|\phi_x\rangle\langle\phi_y|$ with $x\neq y$. Equality with the original rank-one projector is therefore possible only when exactly one coefficient is nonzero. The converse is immediate.
\end{proof}

\section{Characterization theorem}

\begin{theorem}
A finite-dimensional bipartite rank-one ideal projective measurement is localizable in the sense of Definition 4 of Akibue and Miyazaki if and only if, up to LU equivalence, its measurement basis has the block-replicated nice-Bell form \eqref{eq:blockform}.
\end{theorem}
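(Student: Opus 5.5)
The converse (sufficiency) is immediate: Akibue and Miyazaki give an explicit localization protocol for every basis of the form \eqref{eq:blockform}. Definition 4 is invariant under local unitaries, since the parties can absorb $U_A\otimes U_B$ into their local operations. So I concentrate on necessity and proceed in three steps: fix a reference block and show it is a nice Bell basis, propagate it along the reference row and column, then propagate it to the interior blocks.

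\emph{Setup.} Assume localizability. By the remarks after Eq.~\eqref{eq:dephase}, $\E$ is BGNP-localizable and causal, so the block decomposition \eqref{eq:blocks}--\eqref{eq:blockbasis} holds. Fix orthonormal bases of each $\SA^p,\SB^q$ and identify them with $\mathbb C^d$ through unitary isomorphisms $W^A_p$, $W^B_q$, with $W^A_1=W^B_1=\id$. Any vector of $\B_{pq}$ that is maximally entangled on $\SA^p\otimes\SB^q$ can be written as $(W^A_p\otimes W^B_q)|U\rangle\!\rangle/\sqrt d$ for a unitary $U$ on $\mathbb C^d$. So $\B_{pq}$ corresponds to $d^2$ mutually orthogonal unitaries $\mathcal U_{pq}=\{U^{pq}_i\}$, orthogonal in the Hilbert--Schmidt sense. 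Using LU freedom on $\SB^1$, I normalize $\mathcal U_{11}$ to contain $\id$.

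\emph{Reference block.} Choose $|\psi\rangle=|\id\rangle\!\rangle$ and, for $U,V\in\mathcal U_{11}$, set $A=U^T$ on $\SA^1$ and $B=V$ on $\SB^1$. To make these invertible, extend them by the identity on the other blocks. Then $(A\otimes\id)|\psi\rangle=|U\rangle\!\rangle$ and $(\id\otimes B)|\psi\rangle=|V\rangle\!\rangle$ are eigenstates, because they are basis vectors. By BGNP Theorem 5, $(A\otimes B)|\psi\rangle=|VU\rangle\!\rangle$ is an eigenstate. By the Lemma it is therefore proportional to a basis vector, and it lies in block $(1,1)$. Hence $\mathcal U_{11}$ is closed under multiplication up to phases. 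A set of $d^2$ orthogonal unitaries closed in this way forms a projective unitary representation of a group of order $d^2$, i.e.\ a nice error basis. So $\B_{11}$ is a nice Bell basis $\{|M_i\rangle\!\rangle\}$.

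\emph{Propagation.} For block $(p,1)$, pick $|\phi\rangle\in\B_{p1}$, written as $(W^A_p X\otimes\id)|\id\rangle\!\rangle$ with $X$ unitary. Take $|\psi\rangle=|\id\rangle\!\rangle\in\B_{11}$, and let $A$ be the invertible operator that maps $\SA^1\to\SA^p$ via $W^A_pX$ and $\SA^p\to\SA^1$ via its inverse, acting as the identity elsewhere. For $B=M_j$ on $\SB^1$, Theorem 5 shows that $(W^A_pX\otimes M_j)|\id\rangle\!\rangle$ is an eigenstate for every $j$. By the Lemma, these $d^2$ vectors all belong to $\B_{p1}$, so $\B_{p1}=(W^A_pX\otimes\id)\B_{11}$. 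Redefining $W^A_p\to W^A_pX$ is harmless, because transposes of nice Bell elements act on the $B$ side, and this gives $\B_{p1}=(W^A_p\otimes\id)\B_{11}$. The column blocks $(1,q)$ are handled symmetrically. For an interior block $(p,q)$, apply Theorem 5 with base state $|\psi\rangle\in\B_{11}$, $A$ the swap-type operator built from $W^A_p$, and $B$ the one built from $W^B_q$. The hypotheses hold by the row and column results, so $(W^A_p\otimes W^B_q)|M_i\rangle\!\rangle$ are eigenstates. By the Lemma they lie in $\B_{pq}$, and counting $d^2$ of them gives equality. This yields \eqref{eq:blockform}.

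\emph{Main obstacle.} I expect the hardest step to be the reference block, specifically the passage from ``closed under multiplication up to phase'' to a genuine nice error basis with an index group of order $d^2$. The plan there is to define the group as $\mathcal U_{11}$ modulo phases; closure and orthogonality make it a finite group of order $d^2$, and the representation is irreducible because its elements span all operators. A secondary check is that the swap-type operators $A$ are invertible and leave the phases consistent, which I expect to handle by the extend-by-identity construction above.
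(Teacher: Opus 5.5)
Your proposal is correct and follows essentially the same route as the paper: BGNP Theorem 5 is applied within the reference block (via the transpose identity) to get projective closure of the unitary error basis, then with a unitary extension of $W_p^A$ (respectively $W_q^B$) to transport $\B_{11}$ along the first row and column, and finally with both extensions to fill each interior block by the Lemma and a dimension count. Your differences are only clarifying: you extend by the identity or by swap-type unitaries rather than arbitrary unitary completions, you absorb the extra unitary $X$ into $W_p^A$, and you spell out that a finite phase-closed set of $d^2$ pairwise orthogonal unitaries containing $\id$ is a subgroup of $\mathrm{PU}(d)$ of order $d^2$, a step the paper leaves implicit.
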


\begin{proof}
Sufficiency is the explicit localization protocol of Ref.~\cite{Akibue2026}. It remains to prove necessity.

\emph{1. A reference block is a nice Bell basis.---}
Fix $\B_{11}$ and choose $|\Phi_1\rangle\in\B_{11}$. Since all $d^2$ vectors in this block are maximally entangled on the same pair $\SA^1\otimes\SB^1$, choose Schmidt bases and unitaries $U_i$ on $\SA^1$ such that
\begin{equation}
|\Phi_i\rangle=(U_i\otimes\id)|\Phi_1\rangle,
\qquad U_1=\id,
\label{eq:Ui}
\end{equation}
for $i=1,\ldots,d^2$. Orthogonality gives
\begin{equation}
\mathrm{Tr}(U_i^\dagger U_j)=d\,\delta_{ij},
\label{eq:HS}
\end{equation}
so $\{U_i\}$ is a unitary error basis.

In the chosen Schmidt bases,
\begin{equation}
(U_j\otimes\id)|\Phi_1\rangle
=(\id\otimes U_j^T)|\Phi_1\rangle.
\label{eq:transpose}
\end{equation}
Extend $U_i$ and $U_j^T$ to the full local Hilbert spaces by direct sums with arbitrary unitaries on the orthogonal complements; in particular, the extensions can be chosen unitary and to preserve the reference subspaces. The three states
\begin{equation}
|\Phi_1\rangle,\quad
(U_i\otimes\id)|\Phi_1\rangle,\quad
(\id\otimes U_j^T)|\Phi_1\rangle
\end{equation}
are eigenstates of $\E$. BGNP Theorem 5 therefore implies that
\begin{equation}
(U_i\otimes U_j^T)|\Phi_1\rangle
=(U_iU_j\otimes\id)|\Phi_1\rangle
\label{eq:productstate}
\end{equation}
is also an eigenstate. Its support remains in $\SA^1\otimes\SB^1$, so the lemma and Eq.~\eqref{eq:blockbasis} imply
\begin{equation}
U_iU_j\propto U_k
\label{eq:closure}
\end{equation}
for some $k$ and every pair $i,j$. Indeed, equality of the corresponding vectors up to phase implies operator proportionality because $|\Phi_1\rangle$ has full Schmidt rank on the reference block. Since both sides are unitary, the proportionality factor has unit modulus. Thus $\{U_i\}$ satisfies the defining projective-closure condition for a nice unitary error basis \cite{Akibue2026}; hence $\B_{11}$ is a nice Bell basis.

\emph{2. Propagation along rows.---}
Fix $p$ and choose $|\Psi_p\rangle\in\B_{p1}$. Both $|\Psi_p\rangle$ and $|\Phi_1\rangle$ are maximally entangled and have the same Bob-side reduced state on $\SB^1$. Therefore there is a unitary isomorphism
\begin{equation}
W_p^A:\SA^1\rightarrow\SA^p
\end{equation}
such that
\begin{equation}
|\Psi_p\rangle=(W_p^A\otimes\id)|\Phi_1\rangle.
\label{eq:WpA}
\end{equation}
Because $\SA^1$ and $\SA^p$ have the same dimension, extend $W_p^A$ to a global unitary $\widetilde W_p^A$ on $\HA$. Only its restriction to $\SA^1$ enters the states below; the action on the orthogonal complement is arbitrary. For every $i$, Theorem 5 applied to
\begin{equation}
|\Phi_1\rangle,\quad
(\widetilde W_p^A\otimes\id)|\Phi_1\rangle,\quad
(\id\otimes U_i^T)|\Phi_1\rangle
\end{equation}
shows that
\begin{equation}
(\widetilde W_p^A\otimes U_i^T)|\Phi_1\rangle
=(W_p^A\otimes\id)|\Phi_i\rangle
\label{eq:rowstates}
\end{equation}
is an eigenstate in block $(p,1)$. These $d^2$ states are orthonormal and therefore exhaust that block:
\begin{equation}
\B_{p1}=\{(W_p^A\otimes\id)|\Phi_i\rangle\}_{i=1}^{d^2}.
\label{eq:row}
\end{equation}

\emph{3. Propagation along columns.---}
By the symmetric argument, for every $q$ there is a unitary isomorphism
\begin{equation}
W_q^B:\SB^1\rightarrow\SB^q
\end{equation}
with a global unitary extension such that
\begin{equation}
\B_{1q}=\{(\id\otimes W_q^B)|\Phi_i\rangle\}_{i=1}^{d^2}.
\label{eq:col}
\end{equation}

\emph{4. Propagation to every interior block.---}
For fixed $i,p,q$, let $\widetilde W_p^A$ and $\widetilde W_q^B$ be any global unitary extensions of the subspace isomorphisms. Eqs.~\eqref{eq:row} and \eqref{eq:col} show that
\begin{equation}
|\Phi_i\rangle,\quad
(\widetilde W_p^A\otimes\id)|\Phi_i\rangle,\quad
(\id\otimes \widetilde W_q^B)|\Phi_i\rangle
\end{equation}
are eigenstates of $\E$. A final application of Theorem 5 gives
\begin{equation}
(\widetilde W_p^A\otimes \widetilde W_q^B)|\Phi_i\rangle
=(W_p^A\otimes W_q^B)|\Phi_i\rangle
\label{eq:interiorstate}
\end{equation}
as an eigenstate in $\SA^p\otimes\SB^q$. Varying $i$ gives $d^2$ orthonormal states, which must exhaust the block:
\begin{equation}
\B_{pq}=\{(W_p^A\otimes W_q^B)|\Phi_i\rangle\}_{i=1}^{d^2}.
\label{eq:interior}
\end{equation}
Combining all blocks yields exactly Eq.~\eqref{eq:blockform}, up to the local choices of subspace bases. This proves necessity and completes the characterization.
\end{proof}

\section{Discussion}

The result closes the gap between the causal block structure of BGNP and the constructive family identified by Akibue and Miyazaki. Causality alone fixes the dimensions and maximal entanglement inside each block but permits, in principle, incompatible twists between blocks. Localizability is stronger: BGNP Theorem 5 forces products of independently admissible local transformations to produce another measurement eigenstate. Within a reference block this becomes projective closure of the unitary error basis; across blocks it forces the same nice Bell structure to propagate globally.

The proof also clarifies why no assumption that each block is independently localizable is needed. Every use of Theorem 5 concerns the \emph{full} decoherence superoperator $\E$. The block labels enter only through support: after the theorem produces a new eigenstate, its local supports identify the unique block in which it must lie.

The characterization is restricted to finite-dimensional bipartite rank-one ideal projective measurements. It does not address general POVMs, approximate localization, or multipartite measurements. Those extensions may require ingredients beyond the complete-measurement block structure used here.

\section*{Data availability}
No data were created or analyzed in this theoretical work.

\begin{acknowledgments}
OpenAI ChatGPT (GPT-5.6 Sol) was used substantively during exploratory scientific reasoning, literature organization, adversarial proof checking, and manuscript drafting. The author directed the use of the tool, checked the final argument against the cited primary sources, and takes full responsibility for the content of the manuscript.
\end{acknowledgments}

\end{document}